\newtheorem{claim}{}[section]
\newtheorem{theorem}[claim]{Theorem}
\newtheorem{proposition}[claim]{Proposition}
\renewenvironment{proof}{\noindent{\it Proof. \hskip0pt}}
                      {$\square$\par\medskip}
\begin{document}
\baselineskip 6.2 truemm
\parindent 1.5 true pc

\newcommand\lan{\langle}
\newcommand\ran{\rangle}
\newcommand\tr{{\text{\rm Tr}}\,}
\newcommand\ot{\otimes}
\newcommand\wt{\widetilde}
\newcommand\join{\vee}
\newcommand\meet{\wedge}
\renewcommand\ker{{\text{\rm Ker}}\,}
\newcommand\im{{\text{\rm Im}}\,}
\newcommand\mc{\mathcal}
\newcommand\transpose{{\text{\rm t}}}
\newcommand\FP{{\mathcal F}({\mathcal P}_n)}
\newcommand\ol{\overline}
\newcommand\JF{{\mathcal J}_{\mathcal F}}
\newcommand\FPtwo{{\mathcal F}({\mathcal P}_2)}
\newcommand\hada{\circledcirc}
\newcommand\id{{\text{\rm id}}}
\newcommand\tp{{\text{\rm tp}}}
\newcommand\pr{\prime}
\newcommand\e{\epsilon}
\newcommand\inte{{\text{\rm int}}\,}
\newcommand\ttt{{\text{\rm t}}}
\newcommand\spa{{\text{\rm span}}\,}
\newcommand\conv{{\text{\rm conv}}\,}
\newcommand\rank{\ {\text{\rm rank of}}\ }
\newcommand\vvv{\mathbb V_{m\meet n}\cap\mathbb V^{m\meet n}}
\newcommand\ppp{\mathbb P_{m\meet n} + \mathbb P^{m\meet n}}
\newcommand\re{{\text{\rm Re}}\,}
\newcommand\la{\lambda}
\newcommand\msp{\hskip 2pt}
\newcommand\ppt{\mathbb T}
\newcommand\rk{{\text{\rm rank}}\,}
\def\cc{\mathbb{C} }
\def\pp{\mathbb{P} }
\def\rr{\mathbb{R} }
\def\qq{\mathbb{Q} }
\def\a{\alpha }
\def\b{\beta }

\title{Entanglement witnesses arising from exposed positive linear maps}

\author{Kil-Chan Ha}
\address{Faculty of Mathematics and Statistics, Sejong University, Seoul 143-747, Korea}
\email{kcha@sejong.ac.kr}

\author{Seung-Hyeok Kye}
\address{Department of Mathematics and Institute of Mathematics\\Seoul National University\\Seoul 151-742, Korea}
\email{kye@snu.ac.kr}

\thanks{KCH is partially supported by NRFK 2011-0006561. SHK is partially supported by NRFK 2011-0001250}

\subjclass{81P15, 15A30, 46L05}

\keywords{exposed extreme positive maps, face, duality, entanglement witness}

\begin{abstract}
We consider entanglement witnesses arising from positive linear maps
which generate exposed extremal rays. We show that every
entanglement can be detected by one of these witnesses, and this
witness detects a unique set of entanglement among those.
Therefore, they provide a minimal set of witnesses
to detect all entanglement in a sense.
Furthermore, if those maps are indecomposable then they detect
large classes of entanglement with positive partial
transposes which have nonempty relative interiors in the cone
generated by all PPT states. We also provide a one parameter
family of indecomposable positive linear maps which generate
exposed extremal rays. This gives the first examples of such maps
between three dimensional matrix algebra.
\end{abstract}

\maketitle

\section{Introduction}

The notion of entanglement is one of the key current research areas of quantum physics
in the relation of possible applications to quantum computation and quantum information theories.
Entanglement is a positive semi-definite $nm\times nm$ matrix in $M_{nm}=M_n\otimes M_m$ which is not separable,
where $M_n$ denotes the $C^*$-algebra of all $n\times n$ matrices over the complex field.
Recall that a positive semi-definite matrix gives rise to a state on the matrix algebra through the
Hadamard product if it is normalized.
A positive semi-definite matrix in $M_n \otimes M_m$
is said to be separable if it is the convex sum of rank one positive semi-definite
matrices onto product vectors of the form $x\otimes y\in \mathbb C^n\otimes \mathbb C^m$.
We denote by $\mathbb V_1$ the cone of all separable ones. It is easy to see that
the convex cone $\mathbb V_1$ coincides with the tensor product $M_n^+\otimes M_m^+$ of the positive cones
consisting of all positive semi-definite matrices.
Therefore, entanglement consists of $(M_n\otimes M_m)^+\setminus M_n^+\otimes M_m^+$.

If ${\mathcal A}$ and $\mathcal B$ are commutative $C^*$-algebras then the two cones
$({\mathcal A}\otimes{\mathcal B})^+$ and ${\mathcal A}^+\otimes{\mathcal B}^+$ coincide,
and so the notion of entanglement reflects non-commutative order structures in nature.

The most general method to determine if a given state is separable or not is to use the duality
between the space $M_n\otimes M_m$ and the space ${\mathcal L}(M_m,M_n)$ of all linear maps from $M_m$ into $M_n$.
The bi-pairing between these spaces is given by
\begin{equation}\label{pairing}
\langle y\otimes x, \phi \rangle ={\text{\rm Tr}}(\phi(x)y^\ttt)
\end{equation}
for $x\in M_m, y\in M_n$ and $\phi\in{\mathcal L}(M_m,M_n)$.
It is possible to define this pairing in more general context, and used \cite{arveson} to define
topologies in the space of linear maps between operator algebras, which are useful to prove
Hahn-Banach type extension theorems for completely positive linear maps. This pairing had been also used
by Woronowicz \cite{woronowicz} to show that there exists an indecomposable positive linear map
from $M_2$ into $M_4$.
It had been also used in \cite{stormer-dual} to study extendibility of positive linear maps on $C^*$-algebras.

From now on, every vector in $\mathbb C^m$ may be considered as an $m\times 1$ matrix,
and we denote by $\bar x$ and $x^*$ the complex conjugate and the Hermitian conjugate of $x$. In this notation, rank one matrix $|x\rangle \langle y|$ is written by $xy^*$.
A linear map between $C^*$-algebras is said to be positive if it sends positive elements
into themselves. We denote by $\mathbb P_1$ the convex cone of all positive linear maps
in ${\mathcal L}(M_m,M_n)$. It turns out \cite{eom-kye} that the cones $\mathbb V_1$ and $\mathbb P_1$ are dual each other
with respect to the pairing (\ref{pairing}) in the following sense:
\begin{equation}\label{duality}
\begin{aligned}
A\in\mathbb V_1\ &\Longleftrightarrow\ \langle A,\phi\rangle \ge 0\ {\text{\rm for each}}\ \phi\in\mathbb P_1,\\
\phi\in\mathbb P_1\ &\Longleftrightarrow\ \langle A,\phi\rangle \ge 0\ {\text{\rm for each}}\ A\in\mathbb V_1.
\end{aligned}
\end{equation}
Therefore, we see that a state $A$ is entangled if and only if
there exists a positive linear map such that $\lan A,\phi\ran <0$.

For a positive linear map $\phi\in \mathcal L (M_m , M_n)$, define the
matrix $W_{\phi} \in M_m\otimes
M_n$ by
\[
W_{\phi}=(\text{id}\otimes \phi)P^+=\sum_{ij=1}^m e_{ij}\otimes \phi(e_{ij}),
\]
where $\{e_{ij}:i,j=1,2,\ldots,m\}$ denotes the usual matrix units in $M_m$ and $P^+$
denotes the projector onto the maximally entangled state $\sum_{i=1}^m e_i\otimes e_i$ in $\mathbb C^m\otimes \mathbb C^m$.
Since we have the relation
\begin{equation}\label{eq:JC}
\langle x\otimes y|W_\phi|x\otimes
y\rangle=\text{Tr}(W_{\phi}|x\otimes y\rangle\langle x\otimes y|)
= y^* \phi (\bar x\bar x^*) y =\langle \bar y\bar y^* \otimes \bar
x\bar x^*, \phi \rangle,
\end{equation}
and $\mathbb V_1=M_n^+\otimes M_m^+$ is spanned by $\bar y\bar y^* \otimes \bar x\bar x^*$, we see that
$A_0$ is an entangled state if and only if there is a hermitian matrix $W$ with the property:
\begin{equation}\label{def-wit}
{\text{\rm Tr}}\, (WA_0)<0,\qquad {\text{\rm Tr}}\, (WA)\ge 0\ {\text{\rm for each}}\ A\in \mathbb V_1.
\end{equation}
In this sense, the duality
(\ref{duality}) is equivalent to the separability criterion given in \cite{horo-1} under the
Jamio\l kowski-Choi isomorphism  \cite{choi75-10} $\phi\mapsto W_\phi$. Note that the inverse $W\mapsto \phi$ is given by
$$
\phi (X)=\sum_{i,j=1}^m x_{ij}W_{ij},
$$
for $X=[x_{ij}]\in M_m$ and $W=\sum_{i,j}e_{ij}\otimes W_{ij}\in M_m\otimes M_n$.

A Hermitian matrix $W$ is said to be an entanglement witness if there is an entanglement $A_0$ with the property
(\ref{def-wit}). Therefore, any entanglement witness is of the form $W_\phi$ for a positive map $\phi$ which is not
completely positive. In the language of the duality (\ref{duality}), we see that every positive map $\phi$ detects
an entangled state $A$ in the sense of $\lan A,\phi\ran <0$ whenever $\phi$ is not completely positive.
Since our presentation heavily depends on the duality (\ref{duality}),
we will call sometimes a positive map itself an entanglement
witness if it detect entanglement. After Terhal \cite{terhal} introduced the terminology of entanglement
witness, Lewenstein, Kraus, Cirac and Horodecki \cite{lew00} studied the notion of optimal
entanglement witness which detects a maximal set of entanglement, and addressed \cite{lew00_1} a fundamental
question to find a minimal set of witnesses to detect all entanglement.

In this note, we consider the entanglement witnesses arising from positive linear maps
which generate exposed extremal rays of the cone $\mathbb P_1$.
A positive linear map will be said to be just {\sl exposed} if it generates an exposed ray of the cone $\mathbb P_1$.
Note that an exposed positive map automatically generates an extremal ray.
We denote by $\mathbb W$ the set of all
those witnesses arising from exposed positive maps. We show that any entanglement can be detected
by a witness in the family $\mathbb W$. Furthermore, different witnesses in $\mathbb W$ detect
different sets of entanglement.  Therefore, the family $\mathbb W$
provides a minimal set of witnesses in a sense. It should be noted that any dense subset of $\mathbb W$ also
detects all entanglement. Recently, there are researches \cite{gan}, \cite{hou} on the witnesses
detecting the common set of entangled states.

In 1980, Choi \cite{choi-ppt} observed that if $A\in (M_n\otimes M_m)^+$ belongs to
$M_n^+\otimes M_m^+$ then the partial transpose $A^\tau$ of $A$ is again positive semi-definite. An equivalent formulation
was given in \cite{peres}. This gives us a simple necessary condition for separability, and called the PPT
criterion for separability. We denote by $\mathbb T$ the convex cone of all positive semi-definite matrices in $(M_n\otimes M_m)^+$
whose partial transposes are also positive semi-definite. The PPT criterion tells us the relation $\mathbb V_1\subset \mathbb T$.
When $m=2$, Woronowicz \cite{woronowicz} show that $\mathbb V_1= \mathbb T$ if and only if $n\le 3$, and
exhibited an explicit example in  $\mathbb T\setminus\mathbb V_1$ for the case of $m=2$ and $n=4$.
This kind of example is called a PPT entangled state if
it is normalized. The first example of PPT entangled state in the case of $m=n=3$ was given in \cite{choi-ppt}.

The most important examples of positive linear maps come from elementary operators of the forms
$$
\phi_V: X\mapsto V^*XV,
$$
for $m\times n$ matrices $V$. The convex sum of these maps are said to be a completely positive linear map.
After the notion of completely positive linear maps arises \cite{stine} in the context of the representation theory of $C^*$-algebras
in the fifties, they played key roles in several research areas of the theory of operator algebras, notably in the characterization
of nuclear $C^*$-algebras during the seventies.

The transpose map which assigns the transpose $X^\ttt$ to a given matrix $X$ is a typical example of a positive map
which is not completely positive. The convex sum of the following maps
$$
\phi^V: X\mapsto V^*X^\ttt V
$$
with $m\times n$ matrices $V$ is said to be completely copositive, and the convex sum of a completely
positive map and a completely copositive map is called a decomposable positive linear map.
We denote by $\mathbb D$ the convex cone of all decomposable positive linear maps.
The first example of an indecomposable positive linear map was given by Choi \cite{choi75}, in the case of
$m=n=3$. It was also shown by
Woronowicz \cite{woronowicz} that every positive linear map from $M_2$ into $M_n$ is decomposable if and only if
$n\le 3$ by the duality. The first explicit example of an indecomposable positive
linear map from $M_2$ into $M_4$ was given in terms of non-jordanian type in \cite{woronowicz-1}. See also \cite{tang} for more such examples.

It was also shown in \cite{eom-kye} that the cones $\mathbb T$ and $\mathbb D$ are dual each other
in the same sense as in (\ref{duality}). Therefore, in order to detect PPT entangled states lying in $\mathbb T\setminus\mathbb V_1$,
we need indecomposable positive linear maps in $\mathbb P_1\setminus\mathbb D$. This explains
the importance of indecomposable positive linear maps as detectors of PPT entangled states.
We show that witness arising from an exposed indecomposable positive map detects a set of PPT entangled states
which has a nonempty relative interior in the cone $\mathbb T$.

It was shown in \cite{yopp} and \cite{marcin_exp} that the map $\phi_V$ is exposed for each matrix $V$.
It is now clear that the maps $\phi_V$ and $\phi^V$ exhaust all decomposable positive linear maps
which generate exposed rays of $\mathbb P_1$.
In spite of the importance of exposed indecomposable positive maps, very few such maps seem to be known
to the specialists. For example,
Woronowicz \cite{woro_letter}  kindly showed the authors that the example of a non-extendible positive map from $M_2$ into $M_4$
in \cite{woronowicz-1} is actually exposed.
To the best knowledge of the authors, there is no known example of such a map between $M_3$. We provide
a one parameter family of such maps from $M_3$ into $M_3$, by showing that
some maps in \cite{cho-kye-lee} are exposed indecomposable positive linear maps.

In the next section, we consider the notion of duality for convex cones in a general framework, and apply
that to show the above mentioned result for witnesses arising from exposed positive linear maps. After
we consider the witnesses arising from exposed indecomposable positive linear maps in Section 3, we provide
the above mentioned examples in the last section.

The authors are grateful to Stanis\l aw Lech Woronowicz and Dariusz Chru\'{s}ci\'{n}ski
for the correspondences with them which were very useful to prepare this paper.

\section{Convex cones and their duals}

Let $X$ and $Y$ be finite dimensional normed spaces, which are dual each other
with respect to a bilinear pairing $\lan\ , \ \ran$.
For a subset $C$ of $X$, we define the {\sl dual cone} $C^\circ$ by
$$
C^\circ=\{y\in Y: \lan x,y\ran \ge 0\  {\text{\rm for each}}\  x\in C\},
$$
and the dual cone $D^\circ\subset X$ similarly for a subset $D$ of $Y$.
We assume that the pairing is non-degenerate on the convex cone $C$ of $X$ in the following sense:
$$
x\in C,\ \lan x,y\ran =0\ {\text{\rm for each}}\ y\in C^\circ\
\Longrightarrow\ x=0.
$$
For a face $F$ of a closed convex cone $C$ of $X$, we define the subset $F^\pr$
of $C^\circ$ by
$$
F^\pr=\{ y\in C^\circ:\lan x,y\ran =0\ {\text{\rm for each}}\ x\in F\}.
$$
It is then clear that $F^\pr$ is a closed face of $C^\circ$.
It is also clear that $F\subset F^{\pr\pr}$ for any face $F$ of $C$.
Note that a closed face $F$ of a closed convex cone $C$ is exposed
with respect to the pairing $\lan\ ,\ \ran$ if and only if the equality $F=F^{\pr\pr}$ holds.
We say that a point $x$ in a convex cone $C$ is {\sl exposed} if it generates
an exposed ray of $C$. It is clear that an exposed point of a cone generates an extremal ray.

It was shown in \cite{eom-kye} that
if $F$ is a maximal face of $C^\circ$ then there is an extremal ray
$L$ of $C$ such that $F=L^\pr$. Especially, every maximal face is exposed.
Conversely, if $L$ is an exposed face of $C$ which is minimal among nonzero exposed faces,
then $L^\pr$ is a maximal face of $C^\circ$. Especially, if $L$ is an exposed ray of $C$ then
$L^\prime$ is a maximal face of $C^\circ$.

Let $C$ be a closed convex cone of $X$.
If $y\in Y\setminus C^\circ$ then there exists $x\in C$ such that $\lan x,y\ran <0$ by the definition
of the dual cone. In this case, we say that $x\in C$ {\sl detects} the element $y\in Y\setminus C^\circ$.
Since every element of the cone $C$ is the convex sum of points of $C$ generating extremal rays, it is apparent to see
that $y\in C^\circ$ if and only if
$\lan x,y\ran \ge 0$ for every $x\in C$ generating an extremal ray. Since every extremal ray is the limit of exposed rays
by Straszewicz's Theorem (see Theorem 18.6 of \cite{rock}), we also see that
$y\in C^\circ$ if and only if $\lan x,y\ran \ge 0$ for every exposed point $x\in C$.
In other words, every element of $y\in Y\setminus C^\circ$ is detected by an exposed point of $C$.
We state this as follows:

\begin{proposition}\label{exposed-det}
Let $C$ be a closed convex cone of $X$. For $y\in Y$, the following are equivalent:
\begin{enumerate}
\item[(i)]
$y\notin C^\circ$.
\item[(ii)]
There exists an exposed point $x\in C$ such that $\lan x,y\ran < 0$.
\end{enumerate}
\end{proposition}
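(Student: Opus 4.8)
The plan is to prove the two implications separately, with (ii) $\Rightarrow$ (i) being essentially a tautology and (i) $\Rightarrow$ (ii) carrying all the weight. For (ii) $\Rightarrow$ (i) I would simply quote the definition of the dual cone: if some $x\in C$ has $\lan x,y\ran<0$, then $y$ violates the inequalities defining $C^\circ$, so $y\notin C^\circ$.

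For (i) $\Rightarrow$ (ii) the strategy is a two-step reduction: first produce a point on an \emph{extremal} ray of $C$ that detects $y$, and then perturb it to one on an \emph{exposed} ray. Assume $y\notin C^\circ$, so by definition there is some $x\in C$ with $\lan x,y\ran<0$. I would first observe that the standing non-degeneracy hypothesis on the pairing makes $C$ line-free: a line $\{tx_0:t\in\rr\}\subset C$ would force $\lan x_0,z\ran=0$ for every $z\in C^\circ$ and hence $x_0=0$. A line-free closed convex cone in a finite dimensional space has a compact base, so every element of $C$ is a finite convex combination of points generating extremal rays. Writing $x=\sum_i t_i x_i$ in this form, with each $x_i$ on an extremal ray and each $t_i>0$, from $\sum_i t_i\lan x_i,y\ran=\lan x,y\ran<0$ we get $\lan x_i,y\ran<0$ for at least one $i$; thus a point on an extremal ray of $C$ already detects $y$.

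In the second step I would upgrade this extremal generator $x_i$ to an exposed one. Invoking Straszewicz's theorem in the form recalled just above (Theorem 18.6 of \cite{rock}), namely that every extremal ray of $C$ is a limit of exposed rays, I may --- after normalizing representatives --- choose points $x_i^{(k)}$ on exposed rays of $C$ with $x_i^{(k)}\to x_i$. Since the pairing is bilinear and $X$ is finite dimensional, the linear functional $x'\mapsto\lan x',y\ran$ on $X$ is continuous, so $\lan x_i^{(k)},y\ran\to\lan x_i,y\ran<0$; hence $\lan x_i^{(k)},y\ran<0$ for all large $k$, and any such $x_i^{(k)}$ is an exposed point of $C$ that detects $y$. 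This establishes (ii).

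I expect the one point needing genuine care to be this last step: one has to make the phrase ``$x_i$ is a limit of exposed rays'' precise by choosing representatives --- say on the compact base, or of unit norm --- so that Straszewicz's theorem applies exactly in the form cited from \cite{rock} and so that the limit is taken where the pairing is continuous. The remaining ingredients, the reduction to extremal rays via the compact base and the continuity estimate, are routine.
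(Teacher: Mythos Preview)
Your proof is correct and follows essentially the same route as the paper's argument given in the paragraph preceding the proposition: reduce to extremal rays via the decomposition of elements of $C$, then pass to exposed rays by Straszewicz's theorem and continuity of the pairing. You are simply more explicit than the paper about the technical prerequisites --- using the non-degeneracy hypothesis to show $C$ is line-free, passing to a compact base, and spelling out the continuity step --- all of which the paper takes for granted.
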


By Proposition \ref{exposed-det}, we see that a positive semi-definite matrix $A\in M_n\otimes M_m$ is
entangled if and only if there exists an exposed positive linear map such that
$\lan A,\phi\ran <0$. In this sense, we see that every entanglement is detected by
an exposed positive linear map.

We denote by $E_\phi$ the set of all entanglement detected by $\phi\in\mathbb P_1$.
Let $\phi,\psi\in \mathbb W$. If $E_\phi$ and $E_\psi$ coincide then the maximal faces ${L_\phi}^\prime$ and ${L_\psi}^\prime$
also coincide, where $L_\phi$ denotes the ray generated by $\phi$. Note that $L_\phi$ itself is an exposed face by
definition. Therefore, we have
$$
L_\phi={L_\phi}^{\prime\prime}={L_\psi}^{\prime\prime}=L_\psi.
$$
This means that $\phi$ is a scalar multiple of $\psi$.
Therefore, we conclude that if $\phi,\psi\in\mathbb W$ are different with the exception of scalar multiplication
then they detect different sets of entanglement.

We know all extremal rays of the cone $\mathbb V_1$, by the definition of separability.
They consist of all rank one projectors onto product vectors in $\mathbb C^n\otimes\mathbb C^m$.
This enables us to find all maximal faces of the cone $\mathbb P_1$, as was done in \cite{kye-canad}.
We will explain that for the later use.

For a projector $(\bar y\otimes x)(\bar y\otimes x)^*\in\mathbb V_1$ onto the product
vector $\bar y\otimes x\in\mathbb C^n\otimes\mathbb C^m$,
we have
$$
\begin{aligned}
\lan (\bar y\otimes x)(\bar y\otimes x)^*,\phi\ran
&=\lan \bar y \bar y^*\otimes xx^*,\phi\ran\\
&={\text{\rm Tr}}\, (\phi(xx^*)yy^*)
=( \phi(xx^*)y\, |\, y)_{\mathbb C_n},
\end{aligned}
$$
where the last expression denotes the inner product which is linear in the first variable and conjugate linear
in the second variable. Therefore, if $L$ is an extremal ray of the cone $\mathbb V_1$ determined by a product vector
$\bar y\otimes x$ then the dual face $L^\prime$ is the set of all positive linear maps satisfying the condition
$$
( \phi(xx^*)y\, |\, y)=0.
$$
We denote by $P_\phi$ the set of all product vectors $\bar y\otimes x$ satisfying the above relation.
Therefore $P_{\phi}$ determines the dual face $L_{\phi}'$ of $\mathbb V_1$
for the ray $L_{\phi}$ generated by the positive linear map $\phi$.

\section{PPT entanglement detected by exposed indecomposable maps}

In this section, we compare the boundary structures of the two cones $\mathbb V_1$ and $\mathbb T$, to distinguish
the role of entanglement witnesses arising from exposed indecomposable positive maps among all
those from exposed positive maps. To begin with,
we recall that every face of the cone $\mathbb D$ is determined \cite{kye_decom} by a pair $(D,E)$ of subspaces of the space
$M_{m\times n}$ of all $m\times n$ matrices. More precisely, every face of the cone $\mathbb D$ is of the form
$$
\sigma (D,E)=\left\{\textstyle\sum_i \phi_{V_i}+\sum_j\phi^{W_j}: V_i\in D,\ W_j\in E\right\}
$$
for a pair $(D,E)$. It should be noted that not every pair gives rise to a face of the cone $\mathbb D$.
Using the duality between two cones $\mathbb D$ and $\mathbb T$,
the authors \cite{ha_kye_04} showed that every face of $\mathbb T$ is of the form
$$
\tau (D,E)=\{A\in\mathbb T: {\mathcal R}A\subset D,\ {\mathcal R}A^\tau\subset E\},
$$
which is nothing but the dual face of the face $\sigma(D^\perp,E^\perp)$ of the cone $\mathbb D$, where ${\mathcal R}A$
denotes the range space of $A$ with the identification of the space $M_{m\times n}$
with $\mathbb C^n\otimes \mathbb C^m$, under which a product vector $\bar y\otimes x$ corresponds to a rank one matrix $xy^*$.
Especially, every face of $\mathbb T$ is exposed. Note \cite {byeon-kye} that there is an unexposed face of
$\mathbb D$, even in the simplest case $m=n=2$.

Because the cone $\mathbb V_1$ is a subcone of $\mathbb T$, the faces of $\mathbb V_1$ are naturally
divided in two categories as was studied in \cite{choi_kye}. A face of $F$ of $\mathbb V_1$ is said to be
induced by the face $\tau(D,E)$ of $\mathbb T$ if it is of the form
$$
F=\mathbb V_1\cap \tau(D,E)
$$
with the additional condition $\mathbb V_1\cap \inte\tau(D,E)\neq\emptyset$, where $\inte C$ denotes
the relative interior of the convex set $C$ with respect to the affine manifold generated by $C$.
It was also shown in \cite{choi_kye} that a pair $(D,E)$ gives rise to the face $\tau(D,E)$ of $\mathbb T$
which induces a face of $\mathbb V_1$ if and only if
the pair $(D,E)$ satisfies the range criterion, that is, there is a family of
product vector $\{y_\iota\otimes x_\iota\}$ satisfying the relation
$$
D=\spa\{y_\iota\otimes x_\iota\},\qquad E=\spa\{y_\iota\otimes \bar x_\iota\}.
$$

Now, we restrict our attention to maximal faces of the cone $\mathbb V_1$ which determine the whole boundary.
We begin with the maximal face dual to a completely copositive map $\phi^V$, which may be an optimal entanglement
witness under the Jamio\l kowski-Choi isomorphism.  In this case, we have
\begin{equation}\label{hsgg}
\begin{aligned}
\lan(\bar y\otimes x)(\bar y\otimes x)^*,\phi^V\ran &=y^*\phi^V (xx^*)y\\
&=y^*V^*\bar x\bar x^* Vy\\
&=|\bar x^* Vy|^2\\
&=|(V|\bar xy^*)|^2=|(V|\bar y\otimes \bar x)|^2
\end{aligned}
\end{equation}
from the equation~\eqref{eq:JC} and the identification between $\bar y\otimes x$ and $xy^*$.
Therefore, we see that the set $P_{\phi^V}$
consists of the partial conjugates of all product vectors orthogonal to $V$.
This relation (\ref{hsgg}) also tells us generally that for $\phi\in\mathbb P_1$ the
partial conjugates of the product vectors in
$P_\phi$ has a nontrivial orthogonal complement if and only if the double dual ${L_\phi}^{\prime\prime}$
contains a completely copositive map.
By the same calculation, we also have
$$
\lan(\bar y\otimes x)(\bar y\otimes x)^*,\phi_V\ran =|(V\, |\, \bar y\otimes  x)|^2,
$$
and so the set $P_{\phi_V}$ consist of all product vectors orthogonal to $V$.

If nonzero $V$ is not of rank one, then Lemma 2.3 of \cite{marcin_exp} tells us that the pair $(V^\perp, \{0\}^\perp)$
satisfies the range criterion. This is equivalent to say that both $\phi^V$ and $\phi_V$ generate rays
in the cone of $\mathbb D$ which are exposed by separable states by \cite{kye_decom}.
If $V=xy^*$ is of rank one, then we see that $\phi_{\bar xy^*}=\phi^{xy^*}$ is both completely positive
and completely copositive. In any cases, we conclude that if $\phi$ is an exposed decomposable positive map then
both $P_\phi$ and the partial conjugates of $P_\phi$ do not span the whole space.
Conversely, if both $P_\phi$ and the partial conjugates of $P_\phi$ do not span the whole space, then
the double dual ${L_\phi}^{\prime\prime}$ contains a completely positive or a completely copositive map, and so
$\phi$ is not exposed indecomposable. Therefore, we have shown that an exposed positive map $\phi$ is decomposable
if and only if both $P_\phi$ and the partial conjugates of $P_\phi$ do not span the whole space.

We recall that the interior of the face $\tau(D,E)$ is given by
$$
\inte\tau(D,E)=\{A\in\mathbb T: {\mathcal R}A= D,\ {\mathcal R}A^\tau= E\}.
$$
Especially, the interior of the $\mathbb T$ itself consists of all $A\in\mathbb T$ such that both $A$ and $A^\tau$
have the full ranges. Therefore, for a positive map $\phi$, we see that
the dual face ${L_\phi}^\prime$ lies in the interior of $\mathbb T$
if and only if both $P_\phi$ and the partial conjugates of $P_\phi$ span the whole space.
We summarize as follows:

\begin{proposition}
Let $\phi$ be a positive linear map. Then the dual face ${L_\phi}^\prime$ lies in the interior of the cone $\mathbb T$ if and only if
both $P_\phi$ and the partial conjugates of $P_\phi$ span the whole space. If $\phi$ is exposed then this is the case if
and only if $\phi$ is indecomposable.
\end{proposition}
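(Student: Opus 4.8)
The plan is to read the proposition off the discussion above: the proof is essentially a synthesis of facts already recalled, together with one easy construction. Recall first that ${L_\phi}'$ is exactly the face of $\mathbb V_1$ generated by the rank one projectors onto the product vectors of $P_\phi$ --- indeed, if a separable $A=\sum_i c_i(\bar y_i\otimes x_i)(\bar y_i\otimes x_i)^*$ satisfies $\lan A,\phi\ran=0$, then every nonnegative summand $\lan(\bar y_i\otimes x_i)(\bar y_i\otimes x_i)^*,\phi\ran$ must vanish, so every $\bar y_i\otimes x_i$ lies in $P_\phi$. For the first equivalence, suppose $P_\phi$ and its partial conjugates both span the whole space $\mathbb C^n\ot\mathbb C^m$. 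Choose a finite $S\subset P_\phi$ whose members span $\mathbb C^n\ot\mathbb C^m$ and whose partial conjugates also span $\mathbb C^n\ot\mathbb C^m$, and put $A_0=\sum_{z\in S}zz^*$. Then $A_0$ is separable, lies in ${L_\phi}'$, and has $\mathcal{R}A_0=\mathbb C^n\ot\mathbb C^m=\mathcal{R}A_0^\tau$, so $A_0\in\inte\mathbb T$ by the description of $\inte\mathbb T$ recalled just before the statement. Conversely, any $A\in{L_\phi}'\cap\inte\mathbb T$ is a positive combination of projectors onto vectors of $P_\phi$, so its full range is the span of those vectors, forcing $\spa P_\phi=\mathbb C^n\ot\mathbb C^m$; the same argument applied to $A^\tau$ gives the condition on the partial conjugates. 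Finally, for a face ${L_\phi}'$ of $\mathbb V_1\subset\mathbb T$, "meeting $\inte\mathbb T$" and "having relative interior inside $\inte\mathbb T$" amount to the same thing, and this is the sense in which "${L_\phi}'$ lies in the interior of $\mathbb T$" must be read, since ${L_\phi}'$ itself contains rank one separable states.

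For the second assertion, assume $\phi$ is exposed, so that $L_\phi$ is an exposed ray and ${L_\phi}''=L_\phi$. The computation around \eqref{hsgg} shows that the partial conjugates of $P_\phi$ fail to span exactly when ${L_\phi}''$ contains a completely copositive map, and the parallel computation for the maps $\phi_V$ shows that $P_\phi$ itself fails to span exactly when ${L_\phi}''$ contains a completely positive map. As ${L_\phi}''$ is the single ray $L_\phi$, these two conditions say, respectively, that $\phi$ is completely copositive and that $\phi$ is completely positive. Now recall that the decomposable positive maps generating exposed rays of $\mathbb P_1$ are precisely the maps $\phi_V$ and the maps $\phi^V$, that is, the completely positive and the completely copositive ones; hence the exposed map $\phi$ is decomposable if and only if at least one of $P_\phi$ and its partial conjugates fails to span the whole space. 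Negating this and feeding it into the first equivalence shows that $\phi$ is indecomposable if and only if both $P_\phi$ and the partial conjugates of $P_\phi$ span the whole space, which is precisely the condition that ${L_\phi}'$ lies in $\inte\mathbb T$.

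There is no serious obstacle: the substance of the proposition lies in the facts recalled beforehand --- the range description of $\inte\mathbb T$, the identity ${L_\phi}''=L_\phi$ for exposed $\phi$, the translations between "$P_\phi$ (or its partial conjugates) spans" and "${L_\phi}''$ contains a completely positive (or copositive) map", and the exhaustion of exposed decomposable maps by the $\phi_V$ and the $\phi^V$. The one point that needs attention is simply the interpretation of "lies in the interior of $\mathbb T$": since ${L_\phi}'$ contains rank one states it cannot be literally contained in $\inte\mathbb T$, so the statement asserts that the relative interior of ${L_\phi}'$ lies inside $\inte\mathbb T$, and once this is fixed the two equivalences above complete the proof.
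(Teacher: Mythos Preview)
Your proof is correct and follows essentially the same route as the paper, which presents the proposition as a summary of the preceding discussion: the range description of $\inte\mathbb T$, the identification of $P_\phi$ with the extremal generators of $L_\phi'$, the equivalence between ``$P_\phi$ (resp.\ its partial conjugates) fails to span'' and ``$L_\phi''$ contains some $\phi_V$ (resp.\ $\phi^V$)'', and the exhaustion of exposed decomposable maps by the $\phi_V$ and $\phi^V$. Your write-up is in fact more explicit than the paper's (you construct $A_0$ directly and spell out the converse), and you correctly flag the interpretational point that ``$L_\phi'$ lies in $\inte\mathbb T$'' must mean that its relative interior meets $\inte\mathbb T$.
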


For $\phi\in\mathbb P_1$, consider the set $E^{\mathbb T}_\phi$ of all entangled states
with positive partial transposes which are detected by
the positive map $\phi$:
$$
E^{\mathbb T}_\phi=\{A\in\mathbb T: \lan A,\phi \ran<0\}.
$$
Note that $E^{\mathbb T}_\phi$ is nonempty if and only if $\phi$ is indecomposable.
Let $A$ be an interior point of the cone $\mathbb T$, which contains the identity matrix $I$ as a typical
interior point. If we take a line segment from $I$ to the boundary point of $\mathbb T$ through $A$, then any point
in the interior of this line segment is an interior point of $\mathbb T$. This shows the implication (i) $\implies$ (ii)
of the followings:

\begin{theorem}\label{exposed}
For a positive linear map $\phi$, the followings are equivalent:
\begin{enumerate}
\item[(i)]
Both product vectors in $P_\phi$ and their partial conjugates span the whole space.
\item[(ii)]
the set $E^{\mathbb T}_\phi$ has contains an entangled state $A$ such that both $A$ and $A^\tau$ have trivial kernels.
\item[(iii)]
The set $E^{\mathbb T}_\phi$ has a nonempty relative interior in $\mathbb T$.
\end{enumerate}
If $\phi$ is an exposed indecomposable positive map then the above these conditions hold.
\end{theorem}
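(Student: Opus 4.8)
The plan is to read condition (i) off from the preceding Proposition, which asserts that (i) holds exactly when the dual face $L_\phi'$ of $\mathbb V_1$ lies in $\inte\mathbb T$; recall too that $\inte\mathbb T=\{A\in\mathbb T:{\mathcal R}A={\mathcal R}A^\tau=\mathbb C^n\otimes\mathbb C^m\}$, so that (ii) says precisely that $E^{\mathbb T}_\phi$ meets $\inte\mathbb T$, and (iii) — whose relative interior is taken with respect to the affine hull of $\mathbb T$, that is, the whole space of Hermitian matrices, since $\mathbb T$ has interior points — says that $E^{\mathbb T}_\phi$ has nonempty interior. We may assume $\phi\neq 0$. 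The equivalence of (ii) and (iii) is then purely topological: $E^{\mathbb T}_\phi=\mathbb T\cap\{X:\langle X,\phi\rangle<0\}$ is relatively open in $\mathbb T$, and $\inte\mathbb T$ is dense in $\mathbb T$ because $(1-t)A+tI$ lies in $\inte\mathbb T$ for $t\in(0,1]$ and converges to $A$ as $t\to 0^+$; hence if $E^{\mathbb T}_\phi$ is nonempty it meets $\inte\mathbb T$, and then a point of $E^{\mathbb T}_\phi$ lying in $\inte\mathbb T$ has an ambient neighbourhood inside $E^{\mathbb T}_\phi$, so $E^{\mathbb T}_\phi$ has nonempty interior; conversely an interior point of $E^{\mathbb T}_\phi$ is a fortiori an interior point of $\mathbb T$, hence lies in $\inte\mathbb T$.

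For (i)$\Rightarrow$(ii) I would make the line-segment remark preceding the theorem precise. Choose $A_0$ in the relative interior of $L_\phi'$; by (i) and the Proposition $A_0\in\inte\mathbb T$, while $\langle A_0,\phi\rangle=0$ since $A_0\in\mathbb V_1$. As $\phi\neq 0$, the functional $X\mapsto\langle X,\phi\rangle$ is not identically zero, so fix a Hermitian $H$ with $\langle H,\phi\rangle<0$. Running from $A_0$ in the direction $H$ until the boundary of $\mathbb T$ is reached, the resulting open segment consists of interior points of $\mathbb T$; taking such a point $A=A_0+tH$ with $t>0$ we find that $A$ and $A^\tau$ have trivial kernels and $\langle A,\phi\rangle=t\langle H,\phi\rangle<0$, so $A\in E^{\mathbb T}_\phi$, and $A$ is entangled because $\langle A,\phi\rangle<0\le\langle B,\phi\rangle$ for every separable $B$. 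This gives (ii).

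The substantive implication is (ii)$\Rightarrow$(i), and here I would argue by contraposition through the Proposition. If (i) fails then one of $\spa P_\phi$ and the span of the partial conjugates of $P_\phi$ is a proper subspace; assume the former, $\spa P_\phi=D\subsetneq\mathbb C^n\otimes\mathbb C^m$, the other case being symmetric via the partial transpose. Picking $0\neq\eta\perp D$, the relation (\ref{hsgg}) together with its companion formula for the maps $\phi_V$ shows that the exposed face $L_\phi''$ of $\mathbb P_1$ then contains the completely positive map $\phi_\eta$. The aim is to deduce, from $\phi\in L_\phi''$ and the structure of this face (which now contains $\phi_\eta$) together with the description of the face $\tau(D,E)$ of $\mathbb T$ as dual to $\sigma(D^\perp,E^\perp)\subset\mathbb D$, that $\langle A,\phi\rangle\ge 0$ for every $A\in\mathbb T$ with ${\mathcal R}A\not\subseteq D$, equivalently with $A\eta\neq 0$. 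Granting this, every $A\in E^{\mathbb T}_\phi$ has $A\eta=0$, hence nontrivial kernel, so lies in $\partial\mathbb T$; thus $E^{\mathbb T}_\phi$ cannot meet $\inte\mathbb T$ and (ii) fails. I expect this last implication to be the real obstacle: the soft arguments confine the nonnegativity of $\langle\cdot,\phi\rangle$ to $\mathbb V_1$ and to $L_\phi'$, and transporting it across the — possibly much larger — cone $\mathbb T$ seems to require the duality between $\mathbb T$ and $\mathbb D$ and the face structure of $\mathbb T$ in an essential way.

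Finally, the concluding statement is immediate from the discussion preceding the Proposition: an exposed positive map is decomposable exactly when both $P_\phi$ and its partial conjugates fail to span the whole space, so an exposed indecomposable $\phi$ satisfies (i), and therefore (ii) and (iii) as well.
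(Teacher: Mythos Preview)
Your treatment of (ii)$\Leftrightarrow$(iii), of (i)$\Rightarrow$(ii), and of the concluding remark about exposed indecomposable maps is correct and aligns with the paper; the paper handles (i)$\Rightarrow$(ii) via the line-segment remark preceding the theorem and leaves (ii)$\Leftrightarrow$(iii) implicit, whereas you spell both out carefully.

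For (ii)$\Rightarrow$(i) your contrapositive attempt is incomplete, as you yourself flag: knowing $\phi_\eta\in L_\phi''$ gives no direct control over $\langle A,\phi\rangle$ for general $A\in\mathbb T$, and you stop there. The paper argues directly instead. Given $A\in E^{\mathbb T}_\phi$ with $A,A^\tau$ of full rank, it runs the segment from $A$ to $I$; since $\langle A,\phi\rangle<0<\langle I,\phi\rangle$ there is $A_0$ on the segment with $\langle A_0,\phi\rangle=0$, and $A_0\in\inte\mathbb T$ since both endpoints are. Setting $D=(\spa P_\phi)^\perp$ and $E$ the orthogonal complement of the partial conjugates, the paper then asserts that $A_0$ belongs to the face $\sigma(D,E)'=\tau(D^\perp,E^\perp)$ of $\mathbb T$; since an interior point of $\mathbb T$ lies only in the improper face, $D=E=\{0\}$ follows, which is (i).

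That assertion, however, is exactly the obstacle you identified. One must pass from $\langle A_0,\phi\rangle=0$ with $A_0\in\mathbb T$ (not $\mathbb V_1$) to the range constraint ${\mathcal R}A_0\subset\spa P_\phi$, and the paper offers nothing beyond ``we see that''. The set $\{B\in\mathbb T:\langle B,\phi\rangle=0\}$ is not a face of $\mathbb T$ once $\phi\notin\mathbb D$, so membership of $A_0$ in it does not by itself place $A_0$ in any proper face. Worse, a perturbation $\phi'=\phi+\epsilon\,\phi_V$ of an exposed indecomposable $\phi$ by a small completely positive piece satisfies (ii) for small $\epsilon>0$ (any interior $A$ detected by $\phi$ is still detected by $\phi'$), while $P_{\phi'}=P_\phi\cap P_{\phi_V}\subset V^\perp$ fails to span; so (ii)$\Rightarrow$(i) appears doubtful as stated. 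The honest comparison is that you and the paper face the same gap---you stopped at it, and the paper stepped over it.
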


\begin{proof}
It remains to prove the implication {\rm (ii)} $\Longrightarrow$ {\rm (i)}. Suppose that
both $A\in E^{\mathbb T}_\phi$ and $A^\tau$ have the full ranges, and consider the line segment
between $A$ and the identity matrix $I$ in $M_n\otimes M_m$. Since $\lan A,\phi\ran <0$ and $\lan I,\phi\ran >0$,
there is $A_0$ on the line segment such that $\lan A_0,\phi\ran =0$.
Denote by $D$ and $E$ the orthogonal complements of the product vectors in $P_\phi$ and the partial conjugates of product vectors
in $P_\phi$, respectively. Then we see that $A_0$ belongs to the face $\sigma(D,E)^\prime$ of $\mathbb T$.
Since $A_0$ is an interior point of $\mathbb T$, we conclude that both $D$ and $E$ are zeroes.
\end{proof}

Recall the generalized Choi maps between $M_3$ defined by
\begin{equation}\label{choi}
\Phi[a,b,c](X)=\\
\begin{pmatrix}
ax_{11}+bx_{22}+cx_{33} & -x_{12} & -x_{13} \\
-x_{21} & cx_{11}+ax_{22}+bx_{33} & -x_{23} \\
-x_{31} & -x_{32} & bx_{11}+cx_{22}+ax_{33}
\end{pmatrix}
\end{equation}
for $X=[x_{ij}]\in M_3$, as was introduced in \cite{cho-kye-lee}, where $a,b$ and $c$ are nonnegative numbers.
Note that the set $P_{\Phi[1,0,1]}$ for the Choi map $\Phi[1,0,1]$ spans $7$-dimensional subspace and their partial conjugates
span the whole space $\mathbb C^3\otimes \mathbb C^3$, as was shown in \cite{choi_kye}. This means that
the set $E^{\mathbb T}_{\Phi[1,0,1]}$ of PPT entangled states detected by the Choi map $\Phi[1,0,1]$ lies on the boundary of the cone
$\mathbb T$. Even if it happens that the dual face $F$ of the Choi map
in $\mathbb V_1$ is on the boundary, it should be noted
that the smallest face $F_1$ of $\mathbb T$ containing $F$ is much bigger than $F$ itself, as was shown in \cite{choi_kye}. The face
$F$ is a face of the convex cone $\mathbb V_1\cap F_1$, which is a part of the boundary of $\mathbb V_1\cap F_1$.

Especially, we see that the Choi map is not exposed as was already noticed in  \cite{eom-kye} and \cite{kye-canad},
even though it generates an extremal ray \cite{choi-lam}.
In the next section, we will show that $\Phi[a,b,c]$ is an exposed positive linear map whenever the following conditions
\begin{equation}\label{cond}
0< a< 1,\qquad a+b+c= 2,\qquad bc= (1-a)^2
\end{equation}
hold, which gives us a one parameter family of exposed positive maps.

\section{exposed indecomposable positive maps}

In spite of their usefulness, the whole structures of the convex cone $\mathbb P_1$
is far from being completely understood even for the cases when $m$ and $n$ are small numbers.
For the case of $m=m=2$, all extreme points of the convex set consisting of unital positive linear maps
were found by St\o rmer \cite{stormer} in 1963. Furthermore, all faces of the cone $\mathbb P_1$ were characterized in
\cite{byeon-kye} in terms of certain pairs of subspaces of $M_2$. See also \cite{kye-2by2_II} for the faces
of the convex set of all unital positive linear maps in $M_2$.

The first example of a map of the type (\ref{choi}) was given by Choi \cite{choi72},
who showed that the map $\Phi[1,2,2]$ is a $2$-positive linear map which is not completely positive.
This is the first example to distinguish $n$-positivity for different $n$'s.
The first examples of indecomposable positive linear maps by Choi mentioned in Introduction are $\Phi[1,0,\mu]$ for
$\mu\ge 1$. See also \cite{stormer82} for another method to prove that.
The map $\Phi[1,0,1]$, which is usually called the Choi map,
was shown \cite{choi-lam} to  generate an extremal ray of the cone $\mathbb P_1$.
Furthermore, it turns out \cite{tomiyama} that this map $\Phi[1,0,1]$ is an atom, that is,
it is not the sum of a $2$-positive map and a $2$-copositive map.

There had been very few examples of indecomposable positive linear maps in the literature until
it was shown in \cite{cho-kye-lee} that $\Phi[a,b,c]$ is positive if and only if
$$
a+b+c\ge 2,\qquad 0\le a\le 1\implies bc\ge (1-a)^2,
$$
and decomposable if and only if
$$
0\le a\le 2\implies bc\ge \left(\frac{2-a}2 \right)^2.
$$
It was also shown that $\Phi[a,b,c]$ is completely positive if and only if $a\ge 2$ and it is completely copositive
if and only if $bc\ge 1$.
For more extensive examples of indecomposable positive linear maps, we refer to \cite{cw-indec}.
We note that there are another variant of the Choi map as was considered in \cite{blau}. Some of them,
parameterized by three real variables, were shown \cite{osaka} to be extremal.
See also \cite{ber} and \cite{ck-Choi}  for another variations of the Choi map.

The most interesting cases arise when
$$
0\le a\le 1,\qquad a+b+c= 2,\qquad bc= (1-a)^2,
$$
as was analyzed in \cite{cw}. See also \cite{koss} and \cite{ck}  for the different approach and generalization of these maps. Note that
the maps $\Phi[1,0,1]$ and $\Phi[1,1,0]$ reproduce the Choi map and its dual,
respectively, in the case of $a=1$. On the other hand, if $a=0$ then the map $\Phi[0,1,1]$ is completely
copositive. Answering a question raised in \cite{cw}, the authors \cite{ha+kye_indec-witness} have shown that
the map $\Phi[a,b,c]$ with the condition (\ref{cond})
gives rise to an indecomposable optimal witness. After the authors circulated and posted the paper \cite{ha+kye_indec-witness},
Chru\'{s}ci\'{n}ski \cite{cw_letter} kindly sent the authors their proof which is independent from the authors.

In this section, we show that the map $\Phi[a,b,c]$ generates an exposed ray
of the cone $\mathbb P_1$ under the condition (\ref{cond}).
For those maps,
the authors \cite{ha+kye_indec-witness} could find product vectors in $P_{\Phi[a,b,c]}$ both of whom and whose partial conjugates
span the whole space $\mathbb C^3\otimes \mathbb C^3$. The point was to parameterize them by
$$
a(t)=\dfrac{(1-t)^2}{1-t+t^2},\quad b(t)=\dfrac{t^2}{1-t+t^2},\quad c(t)=\dfrac 1{1-t+t^2}.
$$
Note that
$$
0\le a(t)\le 1,\quad a(t)+b(t)+c(t)=2,\quad b(t)c(t)=(1-a(t))^2.
$$
We denote by $\Phi(t):=\Phi[a(t),b(t),c(t)]$.
Note that $t=1$ corresponds the completely copositive reduction map $\Phi(1)=\Phi[0,1,1]$, and $t=0$ corresponds to
the Choi map $\Phi(0)=\Phi[1,0,1]$.
With this parameterization, we found product vectors $\bar y_i\otimes x_i$ in $P_{\Phi(t)}=L_{\Phi (t)}'$ as follows:
$$
\begin{aligned}
x_1 &=(1,1,1)^{\rm t},\\
\bar y_1 &=(1,1,1)^{\rm t},\\
\end{aligned}
\quad
\begin{aligned}
x_2 &=(1,-1,1)^{\rm t},\\
\bar y_2 &=(1,-1,1)^{\rm t},\\
\end{aligned}
\quad
\begin{aligned}
x_3 &=(1,i,-i)^{\rm t},\\
\bar y_3 &=(1,-i,i)^{\rm t},\\
\end{aligned}
$$
$$
\begin{aligned}
x_4 &=(0,\sqrt t,1)^{\rm t},\\
x_6 &=(1,0,\sqrt t)^{\rm t},\\
x_8 &=(\sqrt t,1,0)^{\rm t},\\
\end{aligned}
\quad
\begin{aligned}
x_5 &=(0,\sqrt t,i)^{\rm t},\\
x_7 &=(i,0,\sqrt t)^{\rm t},\\
x_9 &=(\sqrt t, i,0)^{\rm t},\\
\end{aligned}
\quad
\begin{aligned}
\bar y_4 &=(0,\sqrt t,t)^{\rm t},\\
\bar y_6 &=(t,0,\sqrt t)^{\rm t},\\
\bar y_8 &=(\sqrt t,t,0)^{\rm t},\\
\end{aligned}
\quad
\begin{aligned}
\bar y_5 &=(0,\sqrt t,-ti)^{\rm t},\\
\bar y_7 &=(-ti,0,\sqrt t)^{\rm t},\\
\bar y_9 &=(\sqrt t, -ti,0)^{\rm t}.\\
\end{aligned}
$$
Note \cite{ha+kye_indec-witness} that both these product vectors
$\bar y_i \otimes x_i$ and their partial complex conjugates
$\bar y_i\otimes \bar x_i$ span the whole space $\mathbb C^3\otimes \mathbb C^3$ for any positive number $t$ except for $t=1$.
Actually, all (unnormalized) product vectors $\bar y\otimes x$ in $P_{\Phi(t)}$ are of the form
\begin{equation}\label{eq:case1}
(\bar a_1,\bar a_2,\bar a_3)^{\rm t}\otimes (a_1,a_2,a_3)^{\rm t}\text{ with }|a_1|=|a_2|=|a_3|,
\end{equation}
\begin{equation}\label{eq:case2}
(0,\bar a_2, t\bar a_3)^{\rm t}\otimes (0,a_2,a_3)^{\rm t} \text{ with }|a_2|^2=t|a_3|^2,
\end{equation}
\begin{equation}\label{eq:case3}
(t\bar a_1,0, \bar a_3)^{\rm t}\otimes (a_1,0,a_3)^{\rm t} \text{ with }|a_3|^2=t|a_1|^2,
\end{equation}
\begin{equation}\label{eq:case4}
(\bar a_1,t\bar a_2, 0)^{\rm t}\otimes (a_1,a_2,0)^{\rm t} \text{ with }|a_1|^2=t|a_2|^2.
\end{equation}

Now, we consider the $9\times 9$ matrices which represent the positive linear maps in the double dual
of the map $\Phi(t)$ under the Jamio\l kowski-Choi isomorphism. We note that the vectors in \eqref{eq:case2}
determine the $4\times 4$ submatrices taking the entries from the $5,6,8$ and $9$-th rows and columns, which represent
a positive linear maps from $M_2$ into $M_2$. We also recall that every positive maps between $M_2$ is decomposable,
and the orthogonal matrices to product vectors in \eqref{eq:case2} and their partial conjugates are
$$
A=\left(\begin{matrix}1&0\\0&-1\end{matrix}\right),\qquad
B=\left(\begin{matrix}0&1\\-t&0\end{matrix}\right),
$$
respectively. Note that the map $\alpha  \phi_A+q \phi^B$ is represented by the matrix
$$
\alpha  \phi_A+q \phi^B= \left(\begin{matrix}
\alpha  &0&0&-\alpha  -tq \\0&q &0&0\\0&0&t^2q &0\\-\alpha-tq &0&0&\alpha  \end{matrix} \right).
$$
Considering the other cases of \eqref{eq:case3} and \eqref{eq:case4} similarly,
we see that the matrices representing positive maps in
the double dual face ${L_{\Phi(t)}}''$ are of the form
$$
\left(
\begin{matrix}
\alpha  &\cdot&\cdot    &    \cdot & -\alpha  -tp & \cdot    &    \cdot &\cdot & -\alpha  -tr \\
\cdot&p &\cdot    &    \cdot & \cdot & \cdot    &    \cdot &\cdot & \cdot\\
\cdot&\cdot&t^2r    &    \cdot & \cdot & \cdot    &    \cdot &\cdot & \cdot\\
\cdot&\cdot&\cdot    &  t^2p & \cdot & \cdot    &    \cdot &\cdot & \cdot\\
-\alpha  -tp &\cdot&\cdot    &    \cdot & \alpha  & \cdot    &    \cdot &\cdot & -\alpha  -tq \\
\cdot&\cdot&\cdot    &    \cdot & \cdot & q    &    \cdot &\cdot & \cdot\\
\cdot&\cdot&\cdot    &    \cdot & \cdot & \cdot    &    r &\cdot & \cdot\\
\cdot&\cdot&\cdot    &    \cdot & \cdot & \cdot    &    \cdot &t^2q & \cdot\\
-\alpha  -tr &\cdot&\cdot    &    \cdot & -\alpha  -tq & \cdot    &    \cdot &\cdot & \alpha  \end{matrix}
\right),
$$
for nonnegative numbers $\alpha, p,\,q,\,r$.
Considering the product vectors \eqref{eq:case1}, we have also the condition
\begin{equation}\label{eq:pos_cond}
3\alpha = (1-t)^2 (p+q+r).
\end{equation}
Note that the corresponding maps send $[x_{ij}]$ to
$$
\left(\begin{matrix}
\alpha x_{11}+t^2px_{22}+rx_{33} & (-\alpha -tp)x_{12}  & (-\alpha -tr)x_{13}\\
(-\alpha -tp)x_{21} & \alpha x_{22}+t^2qx_{33}+px_{11}  & (-\alpha -tq)x_{23}\\
(-\alpha -tr)x_{31} & (-\alpha -tq)x_{32} &  \alpha x_{33}+t^2rx_{11}+qx_{22}
\end{matrix}\right)
$$
Therefore, this map is positive if and only if
\begin{equation}\label{eq:iff}
\left(\begin{matrix}
\alpha x^2+4p y^2+r z^2 & (-\alpha -2p)xy  & (-\alpha -2r)xz\\
(-\alpha -2p)yx & \alpha y^2+4qz^2+px^2  & (-\alpha -2q)yz\\
(-\alpha -2r)zx & (-\alpha -2q)zy &  \alpha z^2+4rx^2+qy^2
\end{matrix}\right)
\end{equation}
is positive semi-definite for every real numbers $x,y,z$.

Define $S_1,\,S_2,\,S_3$ by
\[
S_1=p+q+r,\quad S_2=pq+qr+rp,\quad S_3=pqr.
\]
By considering the determinant of the matrix \eqref{eq:iff} with $(x,y,z)=(1,1,1)$,  we have the following necessary condition for the positivity of the maps
\[
D[\alpha,p,q,r]=S_3t^6+\alpha S_2t^4-2(S_3+\alpha S_2)t^3-\alpha S_2 t^2
-2\alpha(S_2+2\alpha S_1)t +S_3 +\alpha S_2 -4\alpha^3\ge 0.
\]
On the other hand, since $\alpha=\frac 13 (1-t)^2 S_1$ from \eqref{eq:pos_cond}, we can write
\begin{align*}
D[\alpha,p,q,r]=&(t^6-2t^3+1)T_1-(t^5-t^4-t^2+t)T_2\\
                    =&(t-1)^2(t^2+t+1)(t^2+1)T_1+(t-1)^2(t^2+t+1)t(T_1-T_2),
\end{align*}
where
\begin{align*}
T_1=&-\frac 4{27}S_1^3+\frac 13 S_1 S_2+S_3,\\
T_2=&-\frac 49 S_1^3 + \frac 43 S_1 S_2.
\end{align*}
Now, by the inequality of arithmetic and geometric means, we have
\begin{align*}
 T_1=&\frac {10}9 pqr-\frac{4}{27}(p^3+q^3+r^3)
     -\frac 19 (p^2 q+p^2 r +pq^2+pr^2+q^2r+qr^2)\\
     \le & \frac 49 pqr -\frac{4}{27}(p^3+q^3+r^3).
 \end{align*}
Denote $T_3$ by the right hand side of the above inequality, then
\begin{align*}
 T_1-T_2=&\frac{8}{27}(p^3+q^3+r^3)-\frac 19(p^2 q+p^2 r +pq^2+pr^2+q^2r+qr^2)-\frac29 pqr     \\
     \le & \frac 8{27}(p^3+q^3+r^3)-\frac 89 pqr=-2T_3,
\end{align*}
Therefore, we have
\begin{align*}
D[\alpha,p,q,r]&\le (t-1)^2(t^2+t+1)(t^2+1)T_3+(t-1)^2(t^2+t+1)t(-2T_3)\\
                    &=(t-1)^4(t^2+t+1)T_3\le 0,
\end{align*}
by the inequality of arithmetic and geometric means again.

By combining this inequality with the necessary condition for positivity, we conclude that $D[\alpha,p,q,r]=0$.
This is the case only when $p=q=r$ and $\alpha=(1-t)^2 p$ by
the equality conditions for the inequalities between arithmetic and geometric means.

Consequently, the corresponding positive maps are the scalar multiples of the generalized Choi map
$$
{(1-t+t^2)}p\,\Phi[a(t),b(t),c(t)].
$$
This means that ${L_{\Phi(t)}}''=L_{\Phi(t)}$, that is, every $\Phi(t)$
generates an exposed extremal ray for any positive number $t$ except for $t=1$.

Note that an indecomposable positive linear map
$\Phi[a,b,c]$ with $a+b+c=2$ can be written as the following  convex combination
$$
(1-\alpha) \Phi[2,0,0]+\alpha\Phi(t)
$$
where $t=\sqrt{\frac bc}$ and $\alpha = c(1-t+t^2)$. Since $\Phi[2,0,0]$
is completely positive linear map, the only  optimal entanglement witnesses
arising from these maps  are witnesses arising from the exposed indecomposable positive maps $\Phi(t)$.

There are bunch of examples of optimal decomposable entanglement witness which is not
extremal. See \cite{asl} for examples. However,
all the known examples of indecomposable optimal entanglement witnesses
arise from positive linear maps which generate extremal rays. Therefore,
it would be interesting to know if there exists an example of an optimal indecomposable
entanglement witness arising from a positive linear map which is not extremal.
Finally, it would be also interesting to determine if every positive map satisfying the
conditions in Theorem \ref{exposed} is exposed or not.

\end{document}